\newtheorem{theorem}{Theorem}
\title{On Optimal Disc Covers and \\a New Characterization of the Steiner Center}
\author{Yael Yankelevsky and Alfred M. Bruckstein}
\affil{Technion - Israel Institute of Technology, Haifa 32000, Israel}
\date{}
\begin{document} 
\maketitle
\begin{abstract}
Given $N$ points in the plane $P_1,P_2,...,P_N$ and a location $\Omega$, the union of discs with diameters $\left[\Omega P_i\right]$, $i=1,2,...,N$ covers the convex hull of the points.
The location $\Omega_s$ minimizing the area covered by the union of discs, is shown to be the Steiner center of the convex hull of the points. Similar results for $d$-dimensional Euclidean space are conjectured.
\end{abstract}

\section{Introduction}
In this paper we discuss a sphere coverage problem and, in this context, we propose an optimal coverage criterion defining a center for a given set of points in space.

Suppose that a constellation of $N$ points $\left\{P_1,P_2,...,P_N\right\}$ in $\mathbb{R}^d$ (the $d$-dimensional Euclidean space) is given. 
An arbitrary point $\Omega\in \mathbb{R}^d$ is selected and the spheres $S_{P_i}(\Omega)$, having [$\Omega P_i$] as diameters, are defined.
Hence the centers of $S_{P_i}(\Omega)$ are at 
$\frac{1}{2}(\Omega+P_i)$ and their radii are $\frac{1}{2}\|\Omega-P_i\|$, $i=1,2,...,N$. 

Consider the union of these spheres $S_{P_i}(\Omega)$, their surface "anchored" at $\Omega$.
First we prove that the resulting $d$-dimensional shape always covers the convex hull $CH\left\{P_1,P_2,...,P_N\right\}$ of the given points, hence its volume exceeds the volume of this convex hull for all $\Omega\in\mathbb{R}^d$.
This leads to the following natural question: what is the location $\Omega^*$ which minimizes the excess (or overflow) volume and hence the total volume of the shape, $\Sigma_{(\Omega)}=\bigcup_{i=1}^{N}{S_{P_i}(\Omega)}$?

Such a location, we claim, would be a natural candidate as a "center" for the constellation of points $\left\{P_1,P_2,...,P_N\right\}$.

The problem of determining the point that gives the tightest cover with spheres, minimizing the excess volume beyond the convex hull, is solved here for the planar case (i.e. $d=2$).
An illustration of this problem is presented in Figure~\ref{Fig:problemStatement}.
\begin{figure}[ht]
\centering \includegraphics[scale=0.45]{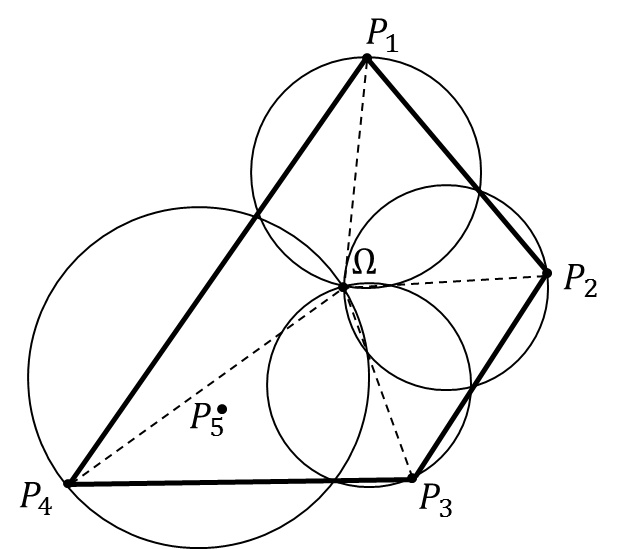} 
\caption{Illustration of the problem in 2 dimensions: For the set of points $P_1,...,P_5$, the (highlighted) convex hull polygon is defined by the vertices  $P_1,...,P_4$. These vertices define 4 discs anchored at the arbitrary point $\Omega$}
\label{Fig:problemStatement}
\end{figure}
The result is the following:
the optimal location $\Omega^*$, is the so called Steiner center of the convex hull of the given points $\left\{P_1,P_2,...,P_N\right\}\in\mathbb{R}^2$. 
The Steiner center is a weighted centroid of the vertices of a convex polygon, the weights being proportional to the exterior angles at the vertices (see Figure~\ref{Fig:extAngles}).
Hence, the Steiner center $\Omega_s$ of a convex polygon $[V_1 V_2 ... V_k]$ is also characterized as the point that yields the tightest disc cover with discs having $[\Omega_sV_j]$ as diameters ($j=1,2,...k$).

For the $d$-dimensional case we conjecture that a similar result holds, however a proof is yet to be found. Some numerical simulations that were performed in 3D seem to confirm this conjecture.

\subsection{Centers for Point Constellations}
Finding meaningful centers for a collection of data points is a fundamental geometric problem in various data analysis and operation research/facility location applications.

One of the interesting centers is the Steiner point (also known as the Steiner curvature centroid).
The Steiner point  of a convex polygon in $\mathbb{R}^2$, is defined as the weighted centroid (i.e. center of mass) of the system obtained by placing a mass equal to the magnitude of the exterior angle at each vertex \cite{Honsberger}.
The traditional characterization is therefore
\begin{equation} \label{Eq:SteinerTraditional}
\Omega_s = \arg\min_{\Omega}\sum_{i=1}^{k}{\theta_i d^2(V_i,\Omega)}
\end{equation}
yielding explicitly 
\begin{equation} 
\Omega_s = \frac{1}{2\pi}\sum_{i=1}^{k}{\theta_i V_i}
\end{equation}
where $d(V_i,\Omega)$ is the Euclidean distance from $V_i$ to $\Omega$ and $\theta_i$ are the external turn angles at the vertices $V_i$ of the convex polygon, that sum to $2\pi$ (see Figure~\ref{Fig:extAngles}).

\begin{figure}[ht]
\centering \includegraphics[scale=0.5]{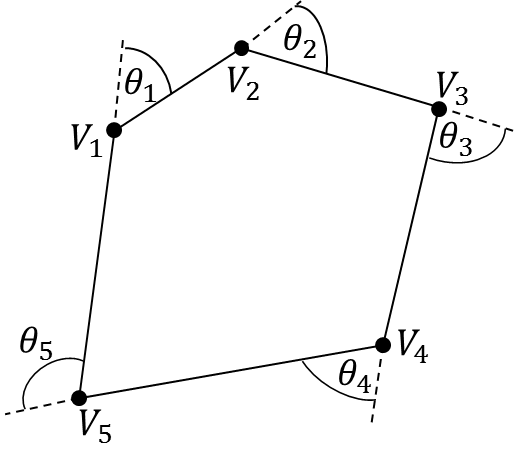} 
\caption{External turn angles}
\label{Fig:extAngles}
\end{figure}

Another characterization of the Steiner center is by projections \cite{Durocher}.
Let $P_i^\theta$ denote the projection of the point $P_i$ on the unit vector $u_\theta=(\cos\theta,\sin\theta)$:
\begin{equation}
	P_i^\theta=u_\theta<P_i,u_\theta>
\end{equation}
then the Steiner center is defined as:
\begin{equation}
	\Omega_s = \frac{1}{\pi}\int_0^{\pi}{u_\theta \left(\operatorname*{\,min}_{i}{|P_i^\theta|}+\operatorname*{\,max}_{i}{|P_i^\theta|}\right)d\theta}
\end{equation}

Furthermore, the Steiner center $\Omega_s$ of a convex shape has some very interesting properties, the nicest one being its linearity with respect to Minkowski addition.
Hence, if $K_1$ and $K_2$ are two convex sets in $\mathbb{R}^d$, we have that
\begin{equation}
\Omega_s(K_1\oplus K_2)=\Omega_s(K_1)+\Omega_s(K_2)
\end{equation}
where $\oplus$ stands for vector addition, i.e.
\begin{equation}
K_1\oplus K_2 = \left\{x+y|x\in K_1,y\in K_2\right\}  .
\end{equation}
It is also true that the map $K\rightarrow\Omega_s(K)$ is similarity invariant, i.e.
\begin{equation}
\Omega_s(tK)=t\Omega_s(K)
\end{equation}
where 
\begin{equation}
tK = \left\{tx|x\in K\subset\mathbb{R}^d\right\} , t>0, 
\end{equation}
and it is well known (see Shephard \cite{Shephard1,Shephard2}, Sallee \cite{Sallee} and Schneider \cite{Schneider}) that these properties and continuity of the mapping characterize the Steiner point.
\newline

The Steiner center, along with other suggested centers for point constellations  (such as the center of gravity, the centroid of the convex hull and the Weber-Fermat median), were all subject to intense research see e.g. \cite{Alt1994},\cite{Berg},\cite{Carmi},\cite{Durocher}, \cite{Kaiser},\cite{McMullen},\cite{Shephard3}).
All these points are characterized by various optimization criteria, such as (weighted) sums of distances (or functions of distances) to the given points or  minimax criteria with different metrics. 

However, we have never encountered a "center" location optimization criterion expressed as the area of a union of shapes defined in terms of the variable point $\Omega$ and the points of the given data set. We note that the problem of covering the convex hull of a set of points with unions of spheres, $\Sigma_{(\Omega)}=\bigcup_{i=1}^{N}{S_{P_i}(\Omega)}$, arose in the analysis of monitoring threshold functions over distributed data streams, in the work of Sharfman, Schuster and Keren \cite{Keren2012,Keren2006}. 
In this work, the authors provided a proof of the coverage result based on a variant of Carath\'{e}odory's theorem, using induction on the dimensionality $d$. The proof we present here is simple and direct, and does not rely on any results beyond the definition of convexity.

After this paper was submitted we found out that in a seminal work on the complexity of computing the volume, Elekes \cite{Elekes1986} considered the same issue and provided a simple proof of coverage very similar to the one we present below (we thank Prof. J. Pach for pointing out Elekes'  paper to us, following a presentation of this work).
\newline

The rest of the paper is organized as follows:
Section~\ref{Section:RdCoverage} proves the theorem on coverage of the convex hull in $\mathbb{R}^d$, then Section~\ref{Section:R2} analyzes the problem for the plane ($d=2$) and presents an even simpler argument proving convex hull coverage and shows that the optimal $\Omega$ is the Steiner point of the convex hull of a planar constellation of points.
Finally, Section~\ref{Section:Conclusions} offers some concluding remarks.

\section{$d$-dimensional sphere covers} \label{Section:RdCoverage}
Given a set of points in $\mathbb{R}^d$, denoted by $\{P_1,P_2,...,P_N\}$, for any 
$\Omega\in\mathbb{R}^d$ define the spheres $S_{P_i}(\Omega)$ with center at the midpoint of the segment $\left[\Omega P_i\right]$ and radius $\frac{1}{2}\|\Omega P_i\|$.
We prove the following:

\begin{theorem} \label{Theorem:Th1}
(Elekes, \cite{Elekes1986})
\begin{equation}
CH\left\{P_1,P_2,...,P_N\right\} \subset \bigcup_{i=1}^{N}{S_{P_i}(\Omega)}
\end{equation}
where $CH\left\{P_1,P_2,...,P_N\right\}$ denotes the convex hull of the points $P_1,P_2,...,P_N$.
\end{theorem}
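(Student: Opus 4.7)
The plan is to reduce membership in the disc (sphere) with diameter $[\Omega P_i]$ to the classical angle/inner-product characterization coming from Thales' theorem: a point $X$ lies in $S_{P_i}(\Omega)$ if and only if $\langle X-\Omega,\,X-P_i\rangle\le 0$. This is an elementary expansion of $\bigl\|X-\tfrac{1}{2}(\Omega+P_i)\bigr\|^{2}\le\tfrac{1}{4}\|\Omega-P_i\|^{2}$, so I would verify this equivalence in one line and take it as the only technical input.

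Next, I would argue by contradiction. Fix an arbitrary $X\in CH\{P_1,\dots,P_N\}$ and suppose $X$ lies in \emph{none} of the spheres $S_{P_i}(\Omega)$, i.e.\ $\langle X-\Omega,\,X-P_i\rangle>0$ for every $i=1,\dots,N$. Since $X$ is in the convex hull, write $X=\sum_{i=1}^{N}\lambda_i P_i$ with $\lambda_i\ge 0$ and $\sum_i\lambda_i=1$. The key identity is the trivial
\begin{equation}
\sum_{i=1}^{N}\lambda_i\,(X-P_i)\;=\;X-\sum_{i=1}^{N}\lambda_i P_i\;=\;0,
\end{equation}
so taking the inner product with the fixed vector $X-\Omega$ yields
\begin{equation}
\sum_{i=1}^{N}\lambda_i\,\langle X-\Omega,\,X-P_i\rangle\;=\;0.
\end{equation}

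This is the contradiction: the $\lambda_i$ are non-negative and not all zero, while every factor $\langle X-\Omega,\,X-P_i\rangle$ is assumed strictly positive, so the sum must be strictly positive. Hence some index $i$ must satisfy $\langle X-\Omega,\,X-P_i\rangle\le 0$, i.e.\ $X\in S_{P_i}(\Omega)$, proving $CH\{P_1,\dots,P_N\}\subset\bigcup_{i=1}^{N}S_{P_i}(\Omega)$.

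There is no real obstacle here; the whole argument hinges on recognizing the Thales characterization $\langle X-\Omega,X-P_i\rangle\le0$ as the right algebraic condition, after which the convex-combination identity $\sum\lambda_i(X-P_i)=0$ collapses the proof into two lines and works in arbitrary dimension $d$ with no induction and no appeal to Carathéodory.
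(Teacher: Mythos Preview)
Your proof is correct and follows essentially the same route as the paper's: both reduce sphere membership to the sign of $\langle X-\Omega,\,X-P_i\rangle$ (the paper writes this as $Q^{T}(P_i-Q)<0$ after placing $\Omega$ at the origin) and then argue by contradiction from $X$ lying in the convex hull. Your final step---pairing the identity $\sum_i\lambda_i(X-P_i)=0$ with $X-\Omega$---is the explicit algebraic version of what the paper states geometrically as ``the convex hull cannot project entirely on one side of $Q$ along the line $\Omega Q$''; your formulation is slightly tidier, but the idea is identical.
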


Without loss of generality, we choose the coordinate system such that $\Omega$ is the origin,  i.e. $\Omega=(0,0,...0)\in\mathbb{R}^d$.
Denote a general point in the convex hull of $\{P_1,P_2,...,P_N\}$ by $Q=\sum_{i=1}^{N}{\lambda_iP_i}$ (with $\lambda_i\geq0,\sum_{i=1}^{N}{\lambda_i}=1$) .

To prove the inclusion of the convex hull in the union of the spheres $S_{P_i}(\Omega)$ we must show that:
\begin{equation} 
	\exists{i\in\{1,2,...,N\}} \quad s.t. \quad d(Q,\frac{1}{2}P_i)\leq d(\Omega,\frac{1}{2}P_i)
\end{equation}

hence Q is inside at least one of the spheres, being closer to the sphere center than its radius.
This clearly implies that:
\begin{equation} 
    Q \in CH\left\{P_1,P_2,...,P_N\right\} \Rightarrow Q\in\bigcup_{i=1}^{N}{S_{P_i}(\Omega)}
\end{equation}

\begin{proof}[Proof of Theorem~\ref{Theorem:Th1}]
Assume that
\begin{equation} \label{Eq:assumption} 
	d(Q,\frac{1}{2}P_i) > d(\Omega,\frac{1}{2}P_i)  \quad \forall i\in\{1,2,...,N\}
\end{equation}

Hence we have
\begin{align} 
	& d^2(Q,\frac{1}{2}P_i) > d^2(\Omega,\frac{1}{2}P_i) \notag \\
	& (Q-\frac{1}{2}P_i)^T(Q-\frac{1}{2}P_i) > \frac{1}{2}P_i^T\cdot \frac{1}{2}P_i \notag \\
	& Q^T Q-Q^TP_i > 0 \notag \\
	& Q^T(P_i-Q) < 0 \quad \forall i\in\{1,2,...,N\}
\end{align} 

This means that the projections of all the vectors from $Q$ to $P_i$ $(=\vec{P_i}-\vec{Q})$, on the vector from $\Omega$ to $Q$ ($=\vec{ Q}$) are strictly negative (see Figure~\ref{Fig:RdProof}).
But this is impossible since $Q\in CH\{P_1,P_2,...,P_N\}$ and this implies that $CH\{P_1,P_2,...,P_N\}$ cannot project on the line $\Omega Q$ on "one side" of $Q$.

\begin{figure}[H]
\centering\includegraphics[scale=0.5,clip, trim= 0cm 1.5cm 0cm 0cm]{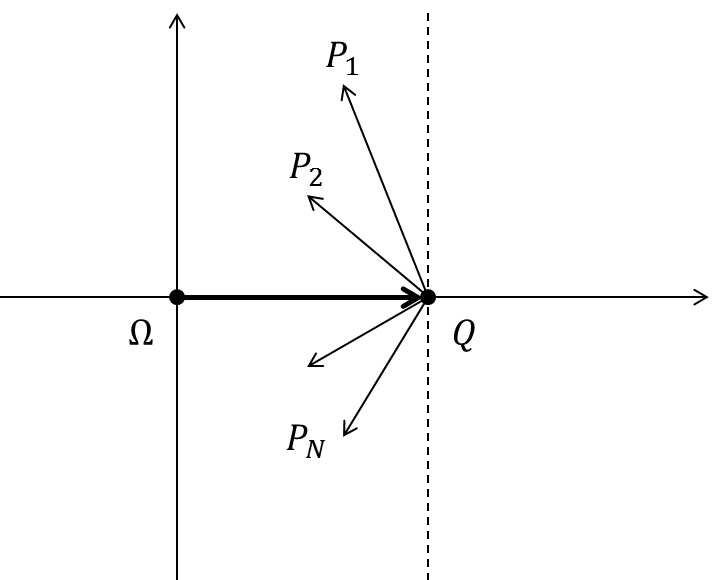}
\caption{Strictly negative projections} \label{Fig:RdProof}
\end{figure}

The contradiction to the assumption in \eqref{Eq:assumption} proves that we must have for some $i$:
\begin{equation} 
	d(Q,\frac{1}{2}P_i) \leq d(\Omega,\frac{1}{2}P_i)
\end{equation}
Hence
\begin{equation} 
    Q \in CH\left\{P_1,P_2,...,P_N\right\} \Rightarrow Q\in\bigcup_{i=1}^{N}{S_{P_i}(\Omega)} 
\end{equation}
\end{proof}

Since $CH\left\{P_1,P_2,...,P_N\right\}\subset\bigcup_{i=1}^{N}{S_{P_i}(\Omega)}$ , we have that
\begin{equation} \nonumber
\begin{aligned}
Volume &\left(\bigcup_{i=1}^{N}{S_{P_i}(\Omega)}\right) = Volume \left(CH\left\{P_1,...,P_N\right\}\right) \\
+& Volume \left( \bigcup_{i=1}^{N}{S_{P_i}(\Omega)} \setminus CH\left\{P_1,P_2,...,P_N\right\}   \right)
\end{aligned}
\end{equation}

It therefore makes sense to ask what is the location $\Omega^*$ that minimizes the volume of the union of spheres  $S_{P_i}(\Omega)$, hence also the excess volume beyond the convex hull of the data points.
In the next section we solve this problem for the important planar case ($d=2$).
Surprisingly, the optimal location $\Omega$ turns out to be a well-known center for planar convex shapes, the Steiner center.

\section{A discovery on disc covers} \label{Section:R2}
In this section, we analyze the planar disc covering problem, first providing an even simpler proof of the convex hull coverage result (Theorem~\ref{Theorem:Th1}) and then determining the location of $\Omega$ that results in the tightest cover.
Namely, we show the following:
\emph{
Given $\left\{V_1,V_2,...,V_k\right\}$ the vertices of a convex polygon in $\mathbb{R}^2$, the Steiner point $\Omega_s$ is the solution of
}
\begin{equation} 
\Omega_s = \arg\min_{\Omega}\left\{Area\left( \bigcup_{i=1}^{k}{S_{V_i}(\Omega)} \right) \right\}
\end{equation}

\subsection{$CH\left\{P_1,P_2,...,P_N\right\}$ is covered by the union of discs $\bigcup_i{S_{P_i(\Omega)}}$}
In 2D, each pair of discs $i,j\in\{1,...,N\}$ may have one of the following mutual positions:
\begin{enumerate}
\item{The boundary circles are tangent to each other at the point $\Omega$.

It is readily seen from Figure~\ref{Fig:tangentCircles} that in this case, the segment $[P_iP_j]$ is either entirely included in a single disc, or the common tangent line through $\Omega$ is perpendicular to both diameters and so $[\Omega P_i],[\Omega P_j]$ are collinear, such that $[P_iP_j]$ consists of the 2 diameters and hence belongs to the union of the 2 discs.

\begin{figure}[ht]
\centering
\subfloat[]{
	\centering
    \includegraphics[scale=0.5]{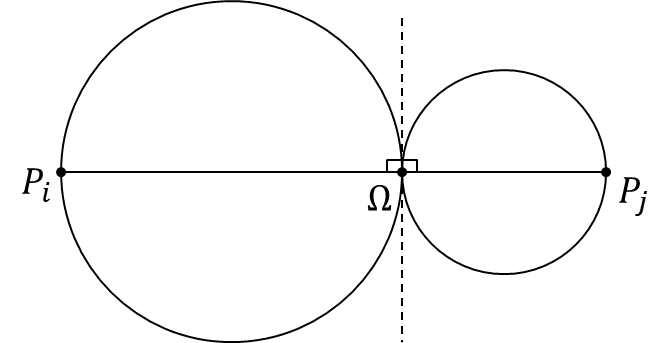}
}
\subfloat[]{
	\centering
    \includegraphics[scale=0.5]{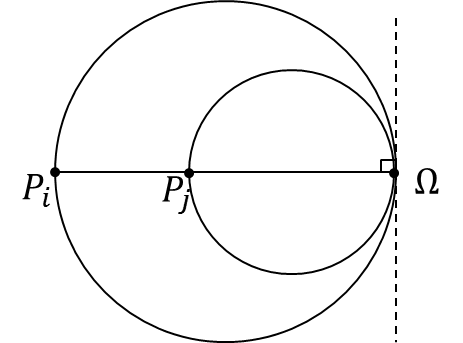}
}
\caption[Tangent circles positions] {Tangent circles (case 1)}
\label{Fig:tangentCircles}
\end{figure}

} 

\item{The circles intersect at two points: $\Omega$ and $Q$ ($Q\neq\Omega$)

Since every inscribed angle that subtends a diameter is a right angle, we have $\angle\Omega QP_i=\angle\Omega QP_j=\frac{\pi}{2}$. Hence either $Q\in[P_iP_j]$ or $Q$ is outside the segment $[P_iP_j]$ but on the same line.
We clearly see that in both cases the segment $[P_iP_j]$ and the triangle  $\bigtriangleup \Omega P_i P_j$ are covered by the union of the 2 discs (see Figure~\ref{Fig:intersectingCircles}).

\begin{figure}[ht]
\centering
\subfloat[]{
	\centering
    \includegraphics[scale=0.55]{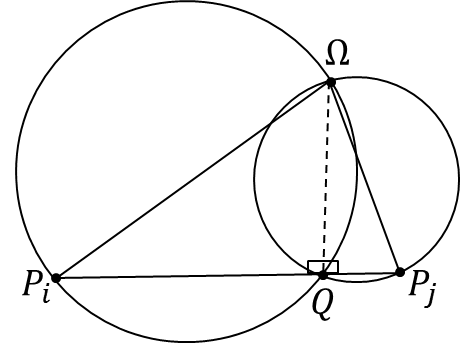}
}
\subfloat[]{
	\centering
    \includegraphics[scale=0.55]{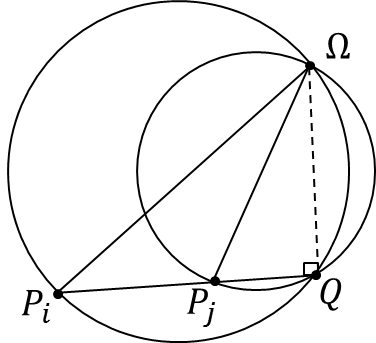}
}
\caption[Intersecting circles positions] {Intersecting circles (case 2)}
\label{Fig:intersectingCircles}
\end{figure}

} 
\end{enumerate}

So far it was shown that for every pair of discs $i,j$, the line segment $[P_iP_j]$, and in fact the triangle $\bigtriangleup \Omega P_i P_j$, is covered by the union of the 2 discs.

The convex hull of a finite set of points in $\mathbb{R}^2$ is a convex polygon whose vertices are a subset of the point set $\left\{P_1,P_2,...,P_N\right\}$. Therefore the CH polygon edges are a subset of all possible segments $\{[P_iP_j] \quad \forall i,j\}$.
As each such segment, and hence each polygon edge, belongs to the union of 2 discs, it obviously belongs to the union of all discs.

Since all the discs intersect at $\Omega$, the union of discs is a star-shaped region, i.e. 
\begin{equation}
\forall Q_0\subset\bigcup_{i=1}^{N}{S_{P_i}(\Omega)},\quad [\Omega Q_0]\subset\bigcup_{i=1}^{N}{S_{P_i}(\Omega)}
\end{equation}
Due to this fact, together with the convexity of the CH polygon, the CH  is completely covered by the union of triangles $\bigcup_{i,j=1}^{N}{\bigtriangleup \Omega P_i P_j}$. 
Finally, since each such triangle is covered by the union of discs, it follows that
\begin{equation}
    \forall \Omega : \quad CH\{P_1,...,P_N\} \subset \bigcup_{i=1}^{N}{S_{P_i}(\Omega)}
\end{equation}
\qed

\subsection{The optimal location for $\Omega$} \label{Section:OptimalOmega}
Next, let us determine the optimal location of $\Omega$ in the sense of minimizing the area difference between the union of discs $S_{P_i}(\Omega),\;i=1,2,...,N$ and the convex hull $CH\left\{P_1,P_2,...,P_N\right\}$. 
Clearly this requires us to simply minimize the area of $\bigcup_{i=1}^{N}{S_{P_i}(\Omega)}$.

Denote by $\Delta S(\Omega)$ the "overflow" region covered beyond $CH\left\{P_1,P_2,...,P_N\right\}$, i.e.
\begin{equation} \label{Eq:DS}
\Delta S(\Omega)=\bigcup_{i=1}^{N}{S_{P_i}(\Omega)}\setminus CH\left\{P_1,P_2,...,P_N\right\}
\end{equation}

\begin{theorem} \label{Theorem:Th2}
The area of $\Delta S(\Omega)$ is minimized when $\Omega$ is located at the Steiner center of the convex hull of $\left\{P_1,P_2,...,P_N\right\}\in\mathbb{R}^2$.
\end{theorem}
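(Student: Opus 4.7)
The plan is to rewrite the area of the union of discs as a one-dimensional integral of a squared support function anchored at $\Omega$, and then recognize the resulting quadratic minimization as the standard integral characterization of the Steiner point. First I would pass to polar coordinates $(r,\theta)$ centered at $\Omega$: a direct calculation shows that $S_{P}(\Omega)$ is the star-shaped set $\{0 \le r \le (P-\Omega)\cdot u_\theta\}$ (and empty where the bound is negative). Since every $P_i$ is a convex combination $\sum_j \lambda_j V_j$ of the hull vertices, $(P_i-\Omega)\cdot u_\theta \le \max_j (V_j-\Omega)\cdot u_\theta$, so $\bigcup_i S_{P_i}(\Omega) = \bigcup_j S_{V_j}(\Omega)$ and only the vertices matter. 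For $\Omega$ inside the hull, the union's radial extent in direction $u_\theta$ is exactly
\begin{equation*}
h_\Omega(\theta) := \max_{1\le j\le k}(V_j-\Omega)\cdot u_\theta > 0,
\end{equation*}
the support function of the polygon with origin shifted to $\Omega$; hence $A(\Omega) := \mathrm{Area}(\bigcup_i S_{P_i}(\Omega)) = \tfrac{1}{2}\int_0^{2\pi} h_\Omega(\theta)^2\,d\theta$.

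Next I would minimize. Writing $h_\Omega(\theta) = h(\theta) - \Omega\cdot u_\theta$, where $h(\theta) = \max_j V_j\cdot u_\theta$ is the support function in the original frame, presents $A$ as a quadratic functional of $\Omega$. Setting $\nabla_\Omega A = 0$ and using the elementary identity $\int_0^{2\pi} u_\theta u_\theta^T\,d\theta = \pi I$ gives the unique interior critical point
\begin{equation*}
\Omega^{*} = \frac{1}{\pi}\int_0^{2\pi} h(\theta)\,u_\theta\,d\theta.
\end{equation*}
To match this with the exterior-angle formula $\Omega_s = \tfrac{1}{2\pi}\sum_j \theta_j V_j$, I would partition $[0,2\pi)$ into the normal cones of the vertices---the cone at $V_j$ having angular width equal to the exterior turn angle $\theta_j$---integrate cone-by-cone, and invoke the closed-polygon identity $\sum_j \ell_j\nu_j=0$ (edge lengths times outward unit normals) to cancel the oscillatory boundary contributions, leaving exactly $\tfrac{1}{2}\sum_j \theta_j V_j$.

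Finally, to promote this interior critical point to a global minimum over all $\Omega\in\mathbb{R}^2$, I would use the fully general expression $A(\Omega) = \tfrac{1}{2}\int_0^{2\pi}[h_\Omega(\theta)]_+^2\,d\theta$. For each $\theta$, $\Omega\mapsto h_\Omega(\theta)$ is convex as a maximum of affine functions, and its composition with the convex nondecreasing map $t\mapsto t_+^2$ is again convex; hence $A$ is convex on $\mathbb{R}^2$. Since $\Omega_s$ is a convex combination of the $V_j$ it lies in the hull, $A$ is smooth at $\Omega_s$ with vanishing gradient, and convexity promotes this to the global minimum. The hard part, I expect, will be the identification step: the polar rewriting of the area is essentially a one-line observation, but reconciling $\tfrac{1}{\pi}\int h(\theta)u_\theta\,d\theta$ with $\tfrac{1}{2\pi}\sum_j\theta_j V_j$ requires the non-obvious cancellation supplied by the closed-polygon normal identity.
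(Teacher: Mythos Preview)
Your proposal is correct and follows a genuinely different route from the paper. The paper works edge-by-edge: for $\Omega$ inside the hull it decomposes the excess region $\Delta S$ into circular segments over each boundary segment $[\bar P_i\bar P_{i+1}]$, computes each piece via the inscribed-angle geometry, and after telescoping obtains the closed form $\Delta S=\tfrac14\sum_i \theta_i\,d_i^2-\tfrac12 S_{CH}$ with $\theta_i$ the exterior angles; the Steiner formula then drops out directly from the weighted least-squares problem $\min_\Omega\sum_i\theta_i d_i^2$. Exterior $\Omega$ are ruled out by an ad hoc ``add $\Omega^{ext}$ as an extra point'' argument. Your approach instead exploits the single observation that in polar coordinates centered at $\Omega$ the disc $S_P(\Omega)$ is precisely $\{r\le (P-\Omega)\cdot u_\theta\}$, so the union is the star-shaped region under the shifted support function and its area is $\tfrac12\int_0^{2\pi}h_\Omega(\theta)^2\,d\theta$; minimizing this quadratic yields the standard integral representation $\Omega_s=\tfrac{1}{\pi}\int_0^{2\pi}h(\theta)u_\theta\,d\theta$ of the Steiner point, and global optimality follows from convexity of $\Omega\mapsto\int[h_\Omega]_+^2$. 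What your route buys is conceptual transparency and immediate applicability to arbitrary convex bodies (not just polygons), plus a cleaner treatment of exterior $\Omega$; the price is the identification step, which the paper's discrete computation avoids by producing $\sum\theta_i V_i$ directly. One minor note: for fixed $\theta$, $h_\Omega(\theta)=h(\theta)-\Omega\cdot u_\theta$ is actually \emph{affine} in $\Omega$ (the $\Omega$-part is common to all $V_j$), so the convexity of $[h_\Omega]_+^2$ is even simpler than you state.
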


\begin{proof}[Proof of Theorem~\ref{Theorem:Th2}]\hspace*{\fill}

We first consider $\Omega\in CH\left\{P_1,P_2,...,P_N\right\}$, which after reordering and renumbering the extremal points from $\left\{P_1,P_2,...,P_N\right\}$ is a convex polygon defined by $\left\{\bar{P}_1,\bar{P}_2,...,\bar{P}_M\right\}:\quad (\bar{P}_1\rightarrow \bar{P}_2\rightarrow ...\rightarrow \bar{P}_M\rightarrow \bar{P}_1)$.

It is readily seen that the $N-M$ points in the interior of the convex hull polygon define discs that are covered by the M discs determined by the external points.
Indeed, if $P_k$ is a point in $\left\{P_1,P_2,...,P_N\right\} \setminus \left\{\bar{P}_1,\bar{P}_2,...,\bar{P}_M\right\}$ we have that $S_{P_k}(\Omega) \subset S_{\tilde{P}_k}(\Omega)$ where $\tilde{P}_k$ is the point where the ray $\left[\Omega P_k\right)$ exits the convex hull (see Figure~\ref{Fig:PK}).
\begin{figure}[ht]
\centering \includegraphics[scale=0.47]{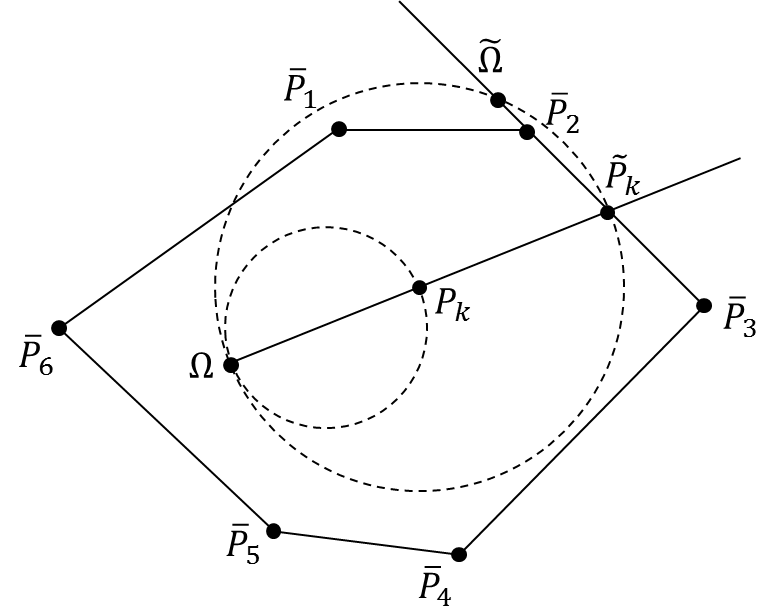}
\caption{Discs defined by internal points are covered by discs defined by external points} \label{Fig:PK}
\end{figure}

The point $\tilde{P}_k$ is on a boundary segment $\left[\bar{P}_\ell \bar{P}_{\ell+1}\right]$ of the convex hull and $S_{\bar{P}_\ell}(\Omega)\cup S_{\bar{P}_{\ell+1}}(\Omega)$ clearly covers $S_{\tilde{P}_k}(\Omega)$, since all three circles intersect at $\Omega$ and at its projection on the line $\left(\bar{P}_\ell \bar{P}_{\ell+1}\right)$, denoted by $Q_\ell$ (see Figure~\ref{Fig:Ql}).
\begin{figure}[ht]
\centering \includegraphics[scale=0.47]{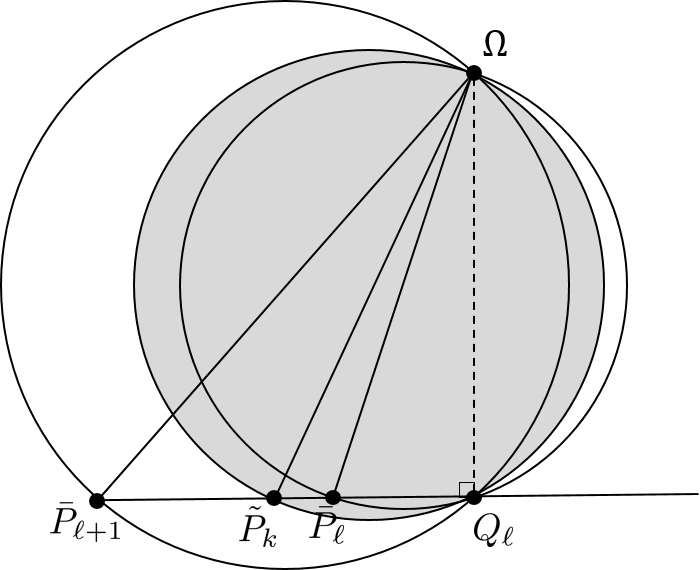}
\caption{The discs $S_{\bar{P}_\ell}(\Omega)$,$S_{\bar{P}_{\ell+1}}(\Omega)$ and $S_{\tilde{P}_k}(\Omega)$ intersect at $\Omega$ and $Q_\ell$} \label{Fig:Ql}
\end{figure}

Therefore let us define the shape $S:=\bigcup_{i=1}^{M}S_{\bar{P}_i}(\Omega)$ and compute its area explicitly as a function of the location of $\Omega$.

Consider the convex polygon $\bar{P}_1 \rightarrow\bar{P}_2 \rightarrow ... \rightarrow\bar{P}_M\rightarrow\bar{P}_1$ and the point $\Omega$ inside it (see Figure~\ref{Fig:OmegaInPoly}).

\begin{figure}[h]
\centering \includegraphics[scale=0.45,clip,trim=0cm 0.1cm 0cm 0.2cm]{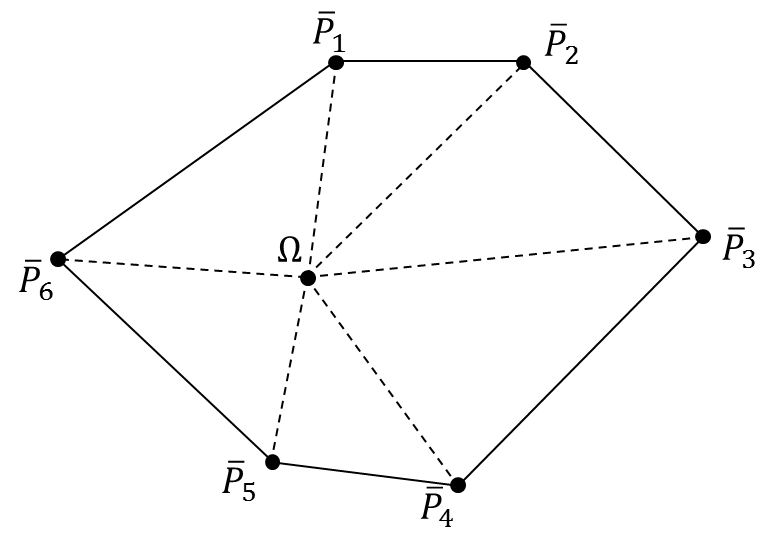}
\caption{The convex polygon $\bar{P}_1\rightarrow\bar{P}_2\rightarrow ...\rightarrow\bar{P}_M\rightarrow\bar{P}_1$ (for $M=6$) with an internal point $\Omega$} \label{Fig:OmegaInPoly}
\end{figure}

The diameters $\left[\Omega\bar{P}_i\right]$ are segments that form a "star configuration" about $\Omega$, their length being $d_i:= d\left[\Omega\bar{P}_i\right]$.
Let us denote by $Q_i$ the projections of $\Omega$ on the lines $\left(\bar{P}_i\bar{P}_{i+1}\right)$. For that purpose, we set $\bar{P}_{M+1}:=\bar{P}_1$.
Also define the angles
\begin{equation} \nonumber
\begin{cases}
\angle\bar{P}_i\Omega Q_i &= \alpha_i \\
\angle Q_i\Omega \bar{P}_{i+1} &= \beta_{i+1}
\end{cases}
\quad i=1,2,...,M
\end{equation}
as illustrated in Figure~\ref{Fig:defineAngles}.
\newline

\begin{figure}[h]
\centering
\subfloat[]{
	\centering
    \includegraphics[scale=0.6]{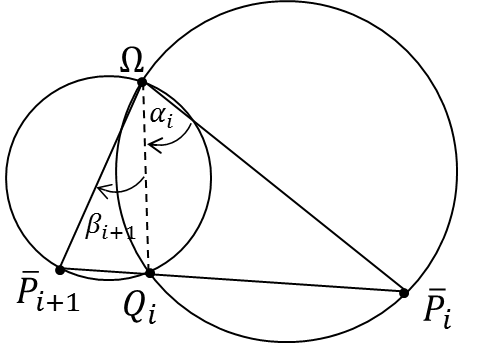}
\label{Fig:defineAngles1}
}
\subfloat[]{
	\centering
    \includegraphics[scale=0.6]{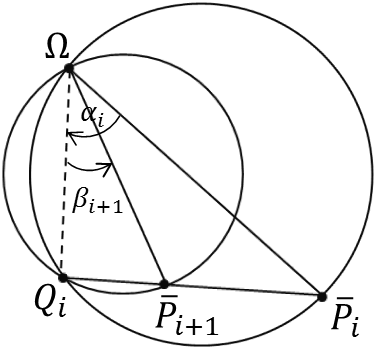}
\label{Fig:defineAngles2}
}
\subfloat[]{
	\centering
    \includegraphics[scale=0.6]{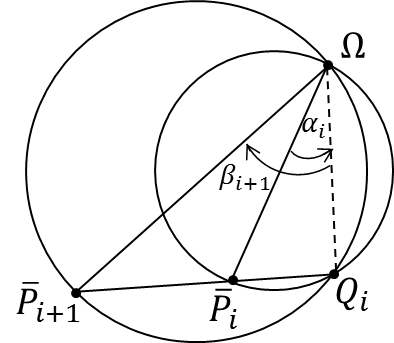}
\label{Fig:defineAngles3}
}
\caption[Internal angles definition] {Definition of the angles $\alpha_i$, $\beta_{i+1}$ for the different possible locations of $Q_i$}
\label{Fig:defineAngles}
\end{figure}

Note that $\alpha_i$ is defined \emph{towards} $Q_i$ and $\beta_i$ \emph{from} $Q_i$, so their directions (clockwise or counter-clockwise) may be inconsistent and depend on the geometric configuration (different configurations can be seen in Figure~\ref{Fig:defineAngles}).

We recall (see Figure~\ref{Fig:simpleCircle}) that the area of a circular segment is given by 
\begin{equation} \label{Eq:simpleCircSegment}
S_{(segment)}(QP)=\frac{1}{4}d^2\alpha-S(\triangle POQ)=\frac{1}{4}d^2\alpha-\frac{1}{2}S(\triangle P\Omega Q).
\end{equation}

\begin{figure}[ht]
\centering \includegraphics[scale=0.53, clip, trim= 0cm 1.7cm 0cm 1cm]{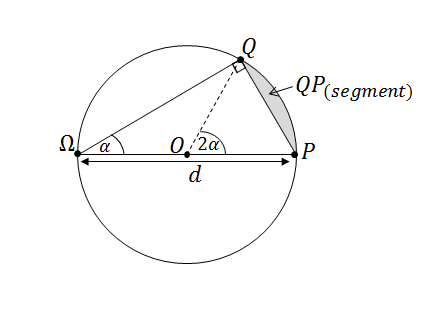}
\caption{Basic properties of triangles and circular segments} \label{Fig:simpleCircle}
\end{figure}

With these preliminary definitions and basic facts in mind, we can calculate the area of the union of discs $\bigcup_{i=1}^{M}S_{\bar{P}_i}(\Omega)$ and the area of the convex hull $CH\left\{\bar{P}_1,\bar{P}_2,...,\bar{P}_M\right\}$ in terms of the distances $d_i$ and the angles $\alpha_i$ and $\beta_i$ (see Figure~\ref{Fig:complexAngles}).
\newline
\begin{figure}[ht]
\centering \includegraphics[scale=0.6,trim=0cm 0cm 0cm 0.3cm,clip]{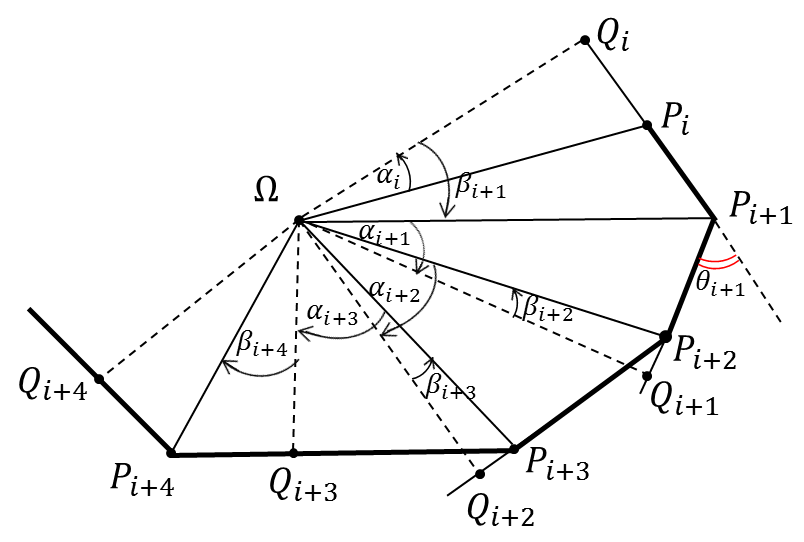}
\caption{Computing the excess area $\Delta S$ and the area of the convex hull $S_{CH}$ (note that $\theta_{i+1}=\angle Q_i\Omega Q_{i+1}=\angle Q_i\Omega P_{i+1}+\angle P_{i+1}\Omega Q_{i+1} = \beta_{i+1}+\alpha_{i+1}$)} \label{Fig:complexAngles}
\end{figure}

Let us express the excess area $\Delta S$ defined in \eqref{Eq:DS} as a sum of circular segments.
It can be seen from Figure~\ref{Fig:defineAngles} that the excess area over the CH edge $[\bar{P}_i\bar{P}_{i+1}]$ in the three possible scenarios is either the sum or difference of the circular segments lying on the chords $[Q_i\bar{P}_i],[Q_i\bar{P}_{i+1}]$.

If $Q\in[\bar{P}_i\bar{P}_{i+1}]$ (see Figure~\ref{Fig:defineAngles1}), the excess area over $[\bar{P}_i\bar{P}_{i+1}]$, denoted $\Delta S_i$, is 
\begin{equation} \nonumber
\Delta S_i = S_{(segment)}\left(Q_i\bar{P}_i\right)+S_{(segment)}\left(Q_i\bar{P}_{i+1}\right)
\end{equation}
and 
\begin{equation} \nonumber
	S(\triangle \bar{P}_i\Omega \bar{P}_{i+1})=S(\triangle \bar{P}_i\Omega Q_i)+ S(\triangle Q_i\Omega \bar{P}_{i+1})
\end{equation}
Thus using \eqref{Eq:simpleCircSegment},
\begin{equation}
\begin{aligned}
\Delta S_i &= \left[\frac{1}{4}d_i^2\alpha_i-\frac{1}{2}S(\triangle \bar{P}_i\Omega Q_i)\right] +\left[\frac{1}{4}d_{i+1}^2\beta_{i+1}-\frac{1}{2}S(\triangle Q_i\Omega\bar{P}_{i+1})\right] \\
=& \frac{1}{4}d_i^2\alpha_i+\frac{1}{4}d_{i+1}^2\beta_{i+1}-\frac{1}{2}\left[S(\triangle \bar{P}_i\Omega Q_i)+S(\triangle Q_i\Omega\bar{P}_{i+1})\right] \\
=& \frac{1}{4}d_i^2\alpha_i+\frac{1}{4}d_{i+1}^2\beta_{i+1}-\frac{1}{2}S(\triangle \bar{P}_i\Omega \bar{P}_{i+1})
\end{aligned}
\end{equation}
If $Q\not\in[\bar{P}_i\bar{P}_{i+1}]$ and is on the continuation of the line determined by $[\bar{P}_i\bar{P}_{i+1}]$ beyond $\bar{P}_{i+1}$ (see Figure~\ref{Fig:defineAngles2}),  
\begin{equation} \nonumber
\Delta S_i = S_{(segment)}\left(Q_i\bar{P}_i\right)-S_{(segment)}\left(Q_i\bar{P}_{i+1}\right)
\end{equation}
and 
\begin{equation} \nonumber
	S(\triangle \bar{P}_i\Omega \bar{P}_{i+1})=S(\triangle \bar{P}_i\Omega Q_i)- S(\triangle Q_i\Omega \bar{P}_{i+1})
\end{equation}
resulting in  
\begin{equation}
\begin{aligned}
\Delta S_i &= \left[\frac{1}{4}d_i^2\alpha_i-\frac{1}{2}S(\triangle \bar{P}_i\Omega Q_i)\right] -\left[\frac{1}{4}d_{i+1}^2\beta_{i+1}-\frac{1}{2}S(\triangle Q_i\Omega\bar{P}_{i+1})\right] \\
=& \frac{1}{4}d_i^2\alpha_i-\frac{1}{4}d_{i+1}^2\beta_{i+1}-\frac{1}{2}S(\triangle \bar{P}_i\Omega \bar{P}_{i+1})
\end{aligned}
\end{equation}
From symmetry, if $Q_i$ is on the line determined by $[\bar{P}_{i+1}\bar{P}_{i}]$ beyond $\bar{P}_{i}$ (see Figure~\ref{Fig:defineAngles3}), the result is identical (up to a sign).

Therefore, summing the excess area over all the M discs, we can write:
\begin{equation}\label{Eq:totalArea}
\begin{aligned}
	\Delta S &= \sum_{i=1}^{M}{ \Delta S_i } = \sum_{i=1}^{M}{ \frac{d_i^2}{4}\alpha_i}+\sum_{i=1}^{M}{\pm\frac{d_{i+1}^2}{4}\beta_{i+1}} -\frac{1}{2}\sum_{i=1}^{M}{S(\triangle \bar{P}_i\Omega \bar{P}_{i+1}) }\\
	&= \sum_{i=1}^{M}{ \Delta S_i } = \sum_{i=1}^{M}{ \frac{d_i^2}{4}\left(\alpha_i\pm\beta_i\right)} -\frac{1}{2}\sum_{i=1}^{M}{S(\triangle \bar{P}_i\Omega \bar{P}_{i+1}) }
\end{aligned}
\end{equation} 
We observe that $\angle Q_{i-1}\Omega Q_i = \alpha_i\pm\beta_i$ (with the sign $\pm$ depending on $i$), and that the convex hull area can similarly be expressed as a sum of triangles:
\begin{equation}\label{Eq:CHarea}
	S_{CH} = \sum_{i=1}^{M}{S(\triangle \bar{P}_i\Omega \bar{P}_{i+1})} 
\end{equation}
Given those observations, we can rewrite \eqref{Eq:totalArea} as:
\begin{equation} 
	\Delta S = \sum_{i=1}^{M}{ \frac{d_i^2}{4}\left(\angle Q_{i-1}\Omega Q_i\right)} -\frac{1}{2}S_{CH}
\end{equation}

Now we wish to find the optimal center that yields the minimal excess area, and since $S_{CH}$ is independent of $\Omega$, we need to solve:
\begin{equation} \label{Eq:simplerProblem}
	\Omega^* = \arg\min_\Omega \Delta S = \arg\min_\Omega \sum_{i=1}^{M}{d_i^2 \left(\angle Q_{i-1}\Omega Q_i\right)}
\end{equation}
By extending each edge $[\bar{P}_{i-1} \bar{P}_i]$ outside the polygon, an exterior angle is formed at the vertex $\bar{P}_i$ whose size is exactly 
$\theta_i=\angle Q_{i-1}\Omega Q_i$ (see Figure~\ref{Fig:complexAngles}). As $\theta_i$ are independent of $\Omega$, \eqref{Eq:simplerProblem} becomes:
\begin{equation}  \label{Eq:simplerProblem2}
	\Omega^* = \arg\min_\Omega \sum_{i=1}^{M}{\theta_{i} d_i^2 }
\end{equation}

This is simply a weighted sum of the square distances of the vertices from $\Omega$, with given constant weights that measure the exterior angles of the convex polygon.
Noting that the M exterior angles sum to $2\pi$, the optimizer of \eqref{Eq:simplerProblem2} is explicitly given by:
\begin{equation}  \label{Eq:omegaOptFinal}
	\Omega^*(x,y)=\left(\sum_{i=1}^{M}{\frac{\theta_{i}}{2\pi}x_i},\sum_{i=1}^{M}{\frac{\theta_{i}}{2\pi}y_i}\right)
\end{equation}

We see that a relatively straightforward calculation provides the optimal location $\Omega^*$ as a weighted average of the points $\bar{P}_1,\bar{P}_2,...,\bar{P}_M$, the weights being proportional to the turn angles at $\bar{P}_i$.

However this is exactly the Steiner center point of the convex polygon defined by the points $P_1,P_2,...,P_N$, as given by~\eqref{Eq:SteinerTraditional}.
\newline

Therefore, if we consider points inside the convex hull of $\{P_1,P_2,...,P_N\}$, the minimal coverage is attained by $\Omega^*$ at the Steiner center of the convex hull.
Suppose that the optimal coverage would be achieved at a point $\Omega^{ext}$ located outside the convex hull. 
Consider the set of points $\{P_1,P_2,...,P_N,\Omega^{ext}\}$. For this set, the optimal coverage considering points inside its convex hull will be achieved by some point $\Omega^{**}$- the Steiner center of that convex hull. This point will necessarily be a convex combination of $\{P_1,P_2,...,P_N,\Omega^{ext}\}$, where the weight of $\Omega^{ext}$ must be strictly positive (since $\Omega^{ext}$ is, by assumption, outside $CH\{P_1,P_2,...,P_N\}$).
Therefore the area of $\left(\bigcup_{i=1}^{N}{S_{P_i}(\Omega^{**})}\right)\cup S_{\Omega^{ext}}(\Omega^{**})$ will be smaller than $\left(\bigcup_{i=1}^{N}{S_{P_i}(\Omega^{ext})}\right)$ by the "optimality" of $\Omega^{**}$.
Hence 
\begin{equation}
\resizebox{.9\hsize}{!}{$
V\left( \bigcup_{i=1}^{N}{S_{P_i}(\Omega^{**})} \right) 
\leq V\left( \left(\bigcup_{i=1}^{N}{S_{P_i}(\Omega^{**})}\right)\cup S_{\Omega^{ext}}(\Omega^{**})\right) < V\left( \bigcup_{i=1}^{N}{S_{P_i}(\Omega^{ext})} \right)
$}
\end{equation}
where $V()$ denotes area, 
contradicting the assumption of optimality of $\Omega^{ext}$.

Hence the optimal location lies within the convex hull of the data points and is its Steiner center.
\end{proof}

\section{Concluding remarks} \label{Section:Conclusions}
This paper presented a novel characterization of the Steiner center as the point that provides the tightest disc coverage for the convex hull of the set of points in the plane.

We first showed that the convex hull of $N$ points in $\mathbb{R}^d$ is covered by the union of $d$-dimensional discs formed such that their diameters are the segments connecting some point $\Omega$ with each of the $N$ vertices.

Next we proved that in $\mathbb{R}^2$ the optimal location of $\Omega$ in the sense of minimizing the area difference between the union of discs and the convex hull is the well known Steiner center.
This interesting property is a nice addition to existing characterizations of the Steiner center.

For higher dimensions, we conjecture that the optimal point is the $\mathbb{R}^d$ Steiner center, but a proof is yet to be found.
Some numerical simulations that were performed in 3D seem to confirm this conjecture.

\subsection*{Acknowledgements}
\small{
We thank Danny Keren for discussion on the coverage problem, J\'anos Pach for pointing out the work of Elekes to us, and Sinai Robins for teaching us about the Steiner center.}

\bibliographystyle{plain}
\bibliography{steiner_bibfile}

\end{document}